\newtheorem{thm}{Theorem}
\crefname{thm}{theorem}{theorems}
\Crefname{thm}{Theorem}{Theorems}
\title{understanding probe behaviors \\ through variational bounds of mutual information}
\name{
    Kwanghee Choi\thanks{This work used the Bridges2 system at PSC and Delta system at NCSA through allocation CIS210014 from the Advanced Cyberinfrastructure Coordination Ecosystem: Services \& Support (ACCESS) program, which is supported by National Science Foundation grants \#2138259, \#2138286, \#2138307, \#2137603, and \#2138296.},
    Jee-weon Jung,
    Shinji Watanabe
}
\address{
    Carnegie Mellon University, USA
}
\begin{document}
\newcommand{\fix}[1]{\textcolor{red}{#1}}

\setlength{\abovedisplayskip}{4pt}
\setlength{\belowdisplayskip}{4pt}

\maketitle
\begin{abstract}
With the success of self-supervised representations, researchers seek a better understanding of the information encapsulated within a representation.
Among various interpretability methods, we focus on classification-based linear probing.
We aim to foster a solid understanding and provide guidelines for linear probing by constructing a novel mathematical framework leveraging information theory.
First, we connect probing with the variational bounds of mutual information (MI) to relax the probe design, equating linear probing with fine-tuning.
Then, we investigate empirical behaviors and practices of probing through our mathematical framework.
We analyze the layer-wise performance curve being convex, which seemingly violates the data processing inequality.
However, we show that the intermediate representations can have the biggest MI estimate because of the tradeoff between better separability and decreasing MI.
We further suggest that the margin of linearly separable representations can be a criterion for measuring the ``goodness of representation.''
We also compare accuracy with MI as the measuring criteria.
Finally, we empirically validate our claims by observing the self-supervised speech models on retaining word and phoneme information.
\end{abstract}

\begin{keywords}
information theory, mutual information, linear probing, interpretability, self-supervised learning
\end{keywords}

\vspace{-0.5em}
\section{Introduction} \label{sec:intro}
\vspace{-0.5em}
Despite recent advances in deep learning, each intermediate representation remains elusive due to its black-box nature.
Linear probing is a tool that enables us to observe what information each representation contains \cite{alain2016understanding,belinkov2018internal}.
Typically, a task is designed to verify whether the representation contains the knowledge of a specific interest.
Solving the task well implies that the knowledge of interest is incorporated inside the representation. 
Classification is often adopted to design such a task.
To observe the intermediate representations, the model is frozen while a learnable probe is attached to the layer to be investigated.
The probe is trained with the aforementioned task, and its test set accuracy indicates the amount of knowledge retained within the model. 
The probe usually has a simple structure, such as a single linear layer, hence called \textit{linear probing} \cite{alain2016understanding}.

However, it is known that interpreting the layer-wise accuracy performance can be tricky.
Accuracy is influenced by multiple factors, such as task or probe design.
For example, one can use the multi-layer perceptron (MLP) instead of employing a single linear layer for the probe \cite{hewitt2019designing,pimentel2020information}.
Hence, the difficulty of analyzing the accuracy ends up in vague expressions in the literature, such as goodness, quality, extractability, readability, or usability \cite{pimentel2020information,belinkov2022probing}.

To support its theoretical limitations, many advances have been made, especially utilizing information theory.
Belinkov et al. \cite{belinkov2018internal} points out that optimizing the probe using the cross-entropy loss indicates that the probe is learning to maximize the mutual information (MI) between the intermediate representations and the labels for the target task.
Pimentel et al. \cite{pimentel2020information} further incorporates probing into the information-theoretic framework, while Voita et al. \cite{voita2020information} utilizes the idea of the minimum description length.

Also, there have been debates on the implementation of probing.
Hewitt et al. \cite{hewitt2019designing} claims that adopting a large probe may end up memorizing the task, failing to give meaningful measurements.
However, others argue that the probe capacity does not affect its theoretical aspects \cite{pimentel2020information,kunz2020classifier}, similar to our viewpoint in \Cref{ssec:linprobe_eq_finetune}.
Furthermore, although the probe accuracy is known to highly correlate with how ``good'' the representation is, there is no formal definition of the ``goodness'' \cite{pimentel2020information,belinkov2022probing}, which we attempt to define in \Cref{ssec:separability}.

\begin{figure}[t] 
\centering
\includegraphics[width=1.0\columnwidth]{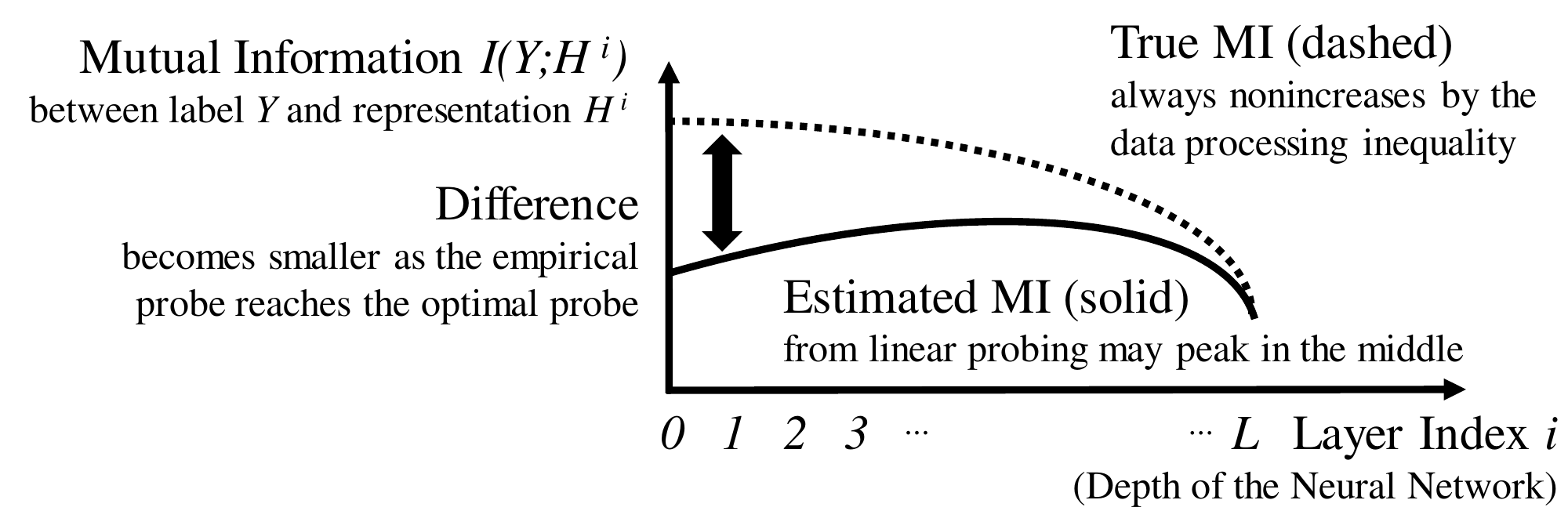}
\caption{
Mutual information $I(Y; H^i)$ between the label $Y$ and the intermediate representation $H^i$ from the $i$-th layer.
Due to the limited probe capacity of linear probing, its MI estimate fails to reach the true MI, often having a peak in the middle.
The MI estimate gets influenced by diverse factors, such as the probe capacity and the linear separability of the representations.
Details in \Cref{ssec:curved_perf}.
}
\vspace{-2em}
\label{figs:diagram}
\end{figure}
Hence, we aim to clarify the ambiguities of implementing and interpreting the results of probing.
Based on the theoretical background in \Cref{sec:theory}, we view probing through the lens of variational bounds of MI \cite{belghazi2018mutual,choi2022combating,poole2019variational} (\Cref{ssec:lin_probe_maximizes_mi_bound}).
It enables us to relax the probe's design, hence equating both linear probing and fine-tuning to maximize the lower bound of MI (\Cref{ssec:linprobe_eq_finetune}).
Based on it, we show that MI is not the only factor determining probing by discussing the curved layer-wise MI estimates (\Cref{ssec:curved_perf}). 
Further, we aim to theoretically define the ``goodness of representation'' by focusing on the margin of linear separability between intermediate features (\Cref{ssec:separability}).
Finally, we show that using the cross-entropy loss is theoretically more sound than accuracy, the latter being a loose bound for the former (\Cref{ssec:accuracy_bounded}).
We further show the experimental evidence in \Cref{sec:exp} by analyzing the self-supervised large-scale speech model and its linguistic properties.
To summarize, our contributions are threefold: (1) we propose to understand probing based on variational bounds of MI; (2) the framework enables us to understand various behaviors of probing, such as the impact of probe capacity or linear separability; and (3) we show the empirical effectiveness of our framework via probing speech models.

\section{Theoretical Background}\label{sec:theory}
\noindent \textbf{Mutual Information.} $I(V; W)$ quantifies the common information between two random variables $V$ and $W$, defined as:
\begin{equation}
    I(V; W) = D_\text{KL}(\mathbb{P}_{VW} || \mathbb{P}_V \otimes \mathbb{P}_W),
\label{eq:mi_def}
\end{equation}
where $D_\text{KL}$ is the Kullback–Leibler divergence, $\mathbb{P}_V$ and $\mathbb{P}_W$ are the marginal distributions of $V$ and $W$, $\otimes$ is the outer product, and $\mathbb{P}_{VW}$ is the joint distribution between $V$ and $W$.
If the two random variables are independent, then the joint distribution $P(v, w)$ and the product of marginals $P(v)P(w)$ becomes the same, making the divergence zero.

Entropy $H(*)$ of the random variable is the total amount of information it has.
The relationship with MI can be described as:
\begin{equation}\label{eq:entropy_upper_bounds_mi}
    0 \leq I(V; W) \leq \max (H(V), H(W)).
\end{equation}
where entropy acts as the upper bound of MI.

\noindent \textbf{Information Path.}
Consider three random variables that construct a Markov chain $U \to V \to W$.
Then, the Data Processing Inequality (DPI) \cite{cover1991elements} enables the following:
\begin{equation}
    \label{eq:three_MI}
    H(U) = I(U; U) \geq I(U; V) \geq I(U; W),
\end{equation}
where the inequality demonstrates the information of $U$ being dissipated through $V$ and $W$ \cite{shwartz2017opening,alain2016understanding}.
Let's extend the discussion to the neural network (NN) $f^{1 \ldots L}$ constructed with $L$ consecutive layers:
\begin{equation}\label{eq:layerwise_nn}
    f^{1 \ldots L} = f^{1} \circ f^{2} \circ \cdots \circ f^{L} = f^{1 \ldots i} \circ f^{i + 1 \ldots L},
\end{equation}
where the integer $1 \leq i \leq L-1$, $\circ$ is the function composition operator, and $f^1, f^2, \cdots, f^L$ denotes the 1$^{st}$, 2$^{nd}$, $\cdots$, $L$$^{th}$ layer, respectively.
Then, we can write the information path \cite{shwartz2017opening} with the intermediate features by generalizing \cref{eq:three_MI}:
\begin{align} \begin{split}\label{eq:dpi_nn}
    H(X) \geq I(X; f^{1}(X)) \geq & I(X; f^{1 \ldots 2}(X)) \\
    \geq & \cdots \geq I(X; f^{1 \ldots L}(X)),
\end{split} \end{align}
where $X$ is the input random variable and its innate information being dissipated throughout the NN.

\noindent \textbf{Donsker-Varadhan (DV) Representation} \cite{donsker1983asymptotic} is the variational dual representation of the KL divergence between two distributions:
\begin{equation}
    D_\text{KL} (\mathbb{Q}_1 || \mathbb{Q}_2) = \sup_{T: \Omega \to \mathbb{R}} \mathbb{E}_{\mathbb{Q}_1}[T] - \log \mathbb{E}_{\mathbb{Q}_2}[e^T].
\label{eq:dv_repr}
\end{equation}
where $\mathbb{Q}_1$ and $\mathbb{Q}_2$ are arbitrary two distributions, the supremum is taken over all the possible functions $T$ on a compact domain $\Omega$.
We can understand the above as finding the optimal function $T^*$ that maximizes the above equation, where $T$ receives an element $q_1 \in \mathbb{Q}_1$ and $q_2 \in \mathbb{Q}_2$ as an input and returns a single real-valued number.
If we switch $\mathbb{Q}_1$ and $\mathbb{Q}_2$ with $\mathbb{P}_{VW}$ and $\mathbb{P}_V \otimes \mathbb{P}_W$, in \cref{eq:mi_def}, DV representation becomes a variational representation of MI \cite{belghazi2018mutual,choi2022combating,poole2019variational}.

\noindent \textbf{Neural Network-Based MI Estimation.}\label{subsec:mine}
We can model the function $T$ of \cref{eq:dv_repr} with a NN $T_\theta: V \times W \to \mathbb{R}$ with parameters $\theta$, which becomes the MI Neural Estimator (MINE) loss $\mathcal{L}_\text{MINE}$ \cite{belghazi2018mutual}:
\begin{equation}\label{eq:mine}
    \mathcal{L}_\text{MINE} (T_\theta)= \mathbb{E}_{\mathbb{P}_{VW}}[T_\theta] - \log \mathbb{E}_{\mathbb{P}_V \otimes \mathbb{P}_W}[e^{T_\theta}],
\end{equation}
where we train the NN $T_\theta$ by maximizing the loss $\mathcal{L}_\text{MINE}$.
The converged loss value becomes the MI estimate $\hat{I}_\theta (V; W)$.

Also, it is known that the InfoNCE loss $\mathcal{L}_\text{InfoNCE}$, another variant of the \cref{eq:dv_repr}, is equivalent to $\mathcal{L}_\text{MINE}$ up to a constant \cite{oord2018representation}:
\begin{equation}\label{eq:infonce}
    \mathcal{L}_\text{InfoNCE} (T_{\theta})= \frac{1}{B} \sum_{i=1}^{B} \log \frac{e^{T_\theta (v_i, w_i)}}{\frac{1}{B} \sum_{j=1}^{B} e^{T_\theta (v_i, w_j)}},
\end{equation}
where $(v_i, w_i) \sim \mathbb{P}_{VW}$ is being sampled $B \in \mathbb{N}^+$ times, i.e., batch size $B$.
Interestingly, using the softmax with the cross-entropy loss during classification training becomes identical to $\mathcal{L}_\text{InfoNCE}$ \cite{choi2022combating}.


\section{Our Information-theoretic Framework}\label{sec:method}
\vspace{-0.6em}
\subsection{Linear probing maximizes the MI estimation bound}\label{ssec:lin_probe_maximizes_mi_bound}
\vspace{-0.6em}
We start with a $C$-way classification task with the input $X$ and label $Y$.
Let us assume that we can fully obtain the label information from the input \cite{shwartz2017opening}.
Then, the total information amount of $Y$ inside $X$ becomes $I(X; Y) = H(Y)$.
Now, we consider the MI between the label $Y$ and the $i^{th}$ intermediate hidden feature $H^i$ of the NN, $I(Y; f^{1 \ldots i}(X)) = I(Y; H^i)$.
We estimate the MI using $\mathcal{L}_\text{MINE}$ in \cref{eq:mine}, which is the loss form of the DV representation in \cref{eq:dv_repr}:
\begin{equation}
    I(Y; H^i) =  \mathcal{L}_\text{MINE} (T^*) \geq \hat{I}_\theta(Y; H^i) = \mathcal{L}_\text{MINE} (T_\theta), \label{eq:binary_mine}
\end{equation}
where $T^*$ is the optimal function, $T_\theta$ is the NN approximation of $T^*$ obtained by maximizing the loss $\mathcal{L}_\text{MINE}$, and $\hat{I}_\theta$ is the MI estimate produced by the NN $T_\theta$.

Note that the above formulation does not limit the structure of $T_\theta$ if it can successfully converge to the optimal function $T^*$.
Hence, following Theorem 7 in \cite{choi2022combating}, we design the linear probe $T_\theta$ as follows:
\begin{equation} \label{eq:probe_linear_design}
    T_\theta (y, h_i) = \texttt{onehot}(y) \cdot \texttt{FC}(h_i),
\end{equation}
where $\texttt{onehot}(y)$ is the $N$-dimensional one-hot label $y \in Y$, $\cdot$ is the dot product, and $\texttt{FC}$ is the learnable linear layer with the input of representation $h_i \in H^i$ and the output dimension of $N$.
If we swap $\mathcal{L}_\text{MINE}$ in \cref{eq:binary_mine} with $\mathcal{L}_\text{InfoNCE}$ in \cref{eq:infonce}, training $T_\theta$ becomes equivalent to linear probing with the $C$-way classification task.

Also, the softmax function directly estimates $P(Y|H^i)$, so that we can easily yield the MI estimate from the averaged log loss \cite{belinkov2018internal}:
\begin{equation}\label{eq:log_loss_is_mi}
    \mathbb{E}_{\mathbb{P}_{YH^i}}[-\log P(Y|H^i)] = H(Y|H^i) = H(Y) - I(Y; H^i),
\end{equation}
where $H(Y)$ is tractable when the label distribution is known.

\vspace{-0.6em}
\subsection{Linear probing is equivalent to fine-tuning}\label{ssec:linprobe_eq_finetune}
\vspace{-0.6em}
By expanding the above idea, if one changes the design of $T_\theta$ as:
\begin{equation} \label{eq:probe_finetune_design}
    T_\theta (y, h_i) = \texttt{onehot}(y) \cdot \texttt{FC}(f^{i+1 \ldots L}(h_i)),
\end{equation}
regarding the rest of the network $f^{i+1 \ldots L}$ as the probe, the formulation becomes the same with fine-tuning the later layers $f^{i+1}$, $f^{i+2}$, $\ldots$ $f^L$.
The probe capacity shrinks for the later layers $i \rightarrow L$, being the same with the linear probing for the penultimate layer $i=L$.
Nevertheless, linear probing and fine-tuning is equivalent, as variational bounds of MI do not enforce any specific designs of $T_\theta$, such as \cref{eq:probe_linear_design,eq:probe_finetune_design}.
This theoretical remark aligns well with the existing literature on probing, where they also claim that the probe capacity does not influence what probe aims to measure \cite{pimentel2020information,kunz2020classifier}.

However, MI estimation quality depends on the probe capacity that is dictated by specific design choices, such as \cref{eq:probe_linear_design,,eq:probe_finetune_design}: bigger the capacity, likely to yield better MI estimates.
The $\mathcal{V}$-information framework \cite{xu2019theory,ethayarajh2022understanding} injects the structural limitation of the probe by modifying the definition of MI.
In contrast, our framework maintains the original definition of MI while acknowledging that each MI estimate can be inaccurate.
We demonstrate the empirical effect of the probe capacity in \Cref{ssec:linear_vs_finetune,ssec:network_capacity}.

\vspace{-0.6em}
\subsection{Curved layer-wise MI estimate of linear probing} \label{ssec:curved_perf}
\vspace{-0.6em}
Based on the viewpoint of $T_\theta$ design and its probe capacity in \Cref{ssec:linprobe_eq_finetune}, we observe the linear probing behaviors in \cref{figs:diagram,,fig:linear_vs_finetune_word}.
As shown in \cref{eq:dpi_nn}, the true MI $I(Y; H^i)$ is always nonincreasing as layer index $i$ increases due to DPI (\cref{eq:three_MI}).
However, the estimated MI $\hat{I}_\theta(Y; H^i)$ may increase as the representation becomes linearly separable even though the true MI $I(Y; H^i)$ decreases.
One should regard this behavior as lost information not being somehow restored; rather, it becomes more palpable towards the linear probe.
We further experimentally show that larger probe capacity curtails the curved layer-wise MI estimate in \Cref{ssec:linear_vs_finetune}.


\vspace{-0.5em}
\subsection{Margin as the measure for ``goodness''}\label{ssec:separability}
\vspace{-0.5em}
Even though the curved layer-wise estimated MI in \Cref{ssec:curved_perf} comes from the structural limitation of a linear probe, many often choose the best-performing layer due to its empirical effectiveness \cite{pasad2023comparative}.
Existing literature often denotes the layer of maximum accuracy to be ``good'' or the innate information being ``easy to extract,'' without formal definition \cite{hewitt2019designing,pimentel2020information}.
To clarify the definition of ``goodness,'' we focus on the margin of linearly separable representations.
In detail, we show that bigger margins yield a more accurate MI estimate for the linear probe, implying the ``goodness.''

\vspace{-0.5em}
\begin{thm}\label{thm:mi_vs_margin}
    Assume a balanced binary classification task with class $0$ and $1$.
    Then, for the linearly separable data with the non-negative margin $d$, MI estimation error of the linear probe is bounded by:
    \begin{equation}
        |I(X; Y) - \hat{I}_\theta (X; Y)| < e^{-d}.
    \end{equation}
\end{thm}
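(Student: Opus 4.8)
The plan is to reduce the estimation error to the expected negative log-likelihood (cross-entropy) of the linear probe and then bound that quantity through the logistic sigmoid using the margin. First I would fix the two quantities inside the absolute value. Under the balanced binary assumption the label is fully determined by the input, so $I(X;Y)=H(Y)=\log 2$ as in \Cref{ssec:lin_probe_maximizes_mi_bound}. Since the MINE/InfoNCE objective is a variational lower bound (\cref{eq:binary_mine}), the estimate never exceeds the truth, $\hat I_\theta(X;Y)\le I(X;Y)$, so the absolute value drops. Combining this with \cref{eq:log_loss_is_mi}, in which the softmax probe supplies the conditional $Q_\theta(Y\mid X)$, gives
\begin{equation}
  |I(X;Y)-\hat I_\theta(X;Y)| = I(X;Y)-\hat I_\theta(X;Y) = \mathbb{E}_{\mathbb{P}_{XY}}\!\left[-\log Q_\theta(Y\mid X)\right],
\end{equation}
so the whole statement becomes an upper bound on the probe's cross-entropy.

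Next I would make the binary softmax explicit. With the linear design of \cref{eq:probe_linear_design} the two class logits differ by a single affine score $\ell(x)$, and $Q_\theta(y=1\mid x)=\sigma(\ell(x))$ with $\sigma$ the logistic sigmoid. Taking the probe whose weight vector is the unit-norm max-margin separator, $\ell(x)$ equals the signed distance of $x$ to the decision boundary, so linear separability with non-negative margin $d$ forces $\ell(x)\ge d$ for class $1$ and $\ell(x)\le -d$ for class $0$. Hence the probability assigned to the true label obeys $Q_\theta(Y\mid X)\ge\sigma(d)$ pointwise, and taking expectations bounds the cross-entropy by $-\log\sigma(d)$.

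Finally I would close with elementary estimates: $-\log\sigma(d)=\log(1+e^{-d})$, and because $\log(1+z)<z$ for every $z>0$, evaluated at $z=e^{-d}$ (admissible since $d\ge 0$), this is strictly smaller than $e^{-d}$, which is the claim. Since the converged MINE estimate maximizes over the entire linear family, its value is at least that of the particular unit-norm probe, so its error is no larger and the same strict bound survives. I expect the main obstacle to be the margin-to-logit normalization in the middle step: rescaling the probe weights can drive the cross-entropy to $0$, so one must pin down the geometric meaning of $d$—the unit-norm separator whose logit equals the signed distance—to make $e^{-d}$ the relevant scale; once that identification is fixed the remaining inequalities are routine.
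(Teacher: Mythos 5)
Your proof is correct, but it takes a genuinely different route from the paper's. The paper works entirely inside the Donsker--Varadhan/MINE form (\cref{eq:mine,eq:binary_mine}): it constructs an explicit witness probe $T_\theta(x,y)=\texttt{onehot}(y)\cdot[w\cdot x+b;\,-w\cdot x-b-d]$ (\cref{eq:probe_separable}) satisfying $T_\theta>0$ on the joint $\mathbb{Q}=\mathbb{P}_{XY}$ and $T_\theta<-d$ on the mismatched pairs $\Tilde{\mathbb{Q}}$, splits the negative-sample expectation $\mathbb{E}_{\mathbb{P}_X\otimes\mathbb{P}_Y}[e^{T_\theta}]$ into these two halves using the balance assumption, writes the error as $\Delta=\log\bigl((\mathbb{E}_{\mathbb{Q}}[e^{T_\theta}]+\mathbb{E}_{\Tilde{\mathbb{Q}}}[e^{T_\theta}])/e^{\mathbb{E}_{\mathbb{Q}}[T_\theta]}\bigr)$, and finishes with $\log(1+x)<x$. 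You instead bound the conditional-softmax estimate: via \cref{eq:log_loss_is_mi,eq:cond_entr} and the fact that separability forces $H(Y\mid X)=0$, the error equals the probe's cross-entropy, which the margin bounds \emph{pointwise} by $-\log\sigma(d)=\log(1+e^{-d})<e^{-d}$, with $\sigma$ the logistic sigmoid. The skeleton is shared (exhibit a witness in the linear family, use $I(X;Y)=H(Y)=\log 2$, invoke the lower-bound property to drop the absolute value, close with $\log(1+z)<z$), but the decompositions differ: joint-versus-marginals expectations for the paper, per-example correct-class probability for you. Your route buys something real: because your bound holds at every sample, it needs no upper control on how large the scores of correctly matched pairs can get, whereas the paper's step $\log\bigl((\mathbb{E}_{\mathbb{Q}}[e^{T_\theta}]+\mathbb{E}_{\Tilde{\mathbb{Q}}}[e^{T_\theta}])/e^{\mathbb{E}_{\mathbb{Q}}[T_\theta]}\bigr)\le\log\bigl(1+\mathbb{E}_{\Tilde{\mathbb{Q}}}[e^{T_\theta}]/e^{\mathbb{E}_{\mathbb{Q}}[T_\theta]}\bigr)$ implicitly requires $\mathbb{E}_{\mathbb{Q}}[e^{T_\theta}]\le e^{\mathbb{E}_{\mathbb{Q}}[T_\theta]}$, which is the \emph{reverse} of Jensen's inequality and holds only when $T_\theta$ is essentially constant on $\mathbb{Q}$ --- a condition the paper's witness (bounded below by $0$ but not above) does not guarantee; your argument sidesteps this entirely. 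Two small mismatches to fix in your write-up: the determinism of $Y$ given $X$ comes from the separability assumption, not from balance (balance only pins down $H(Y)=\log 2$); and your unit-norm, signed-distance margin convention ($\pm d$ about the separator) is not the paper's, which hard-codes a one-sided score gap of $d$ into the witness logits $[f(x);\,-f(x)-d]$ --- though your key estimate $Q_\theta(Y\mid X)\ge\sigma(d)$ holds verbatim for those logits as well, so nothing in your argument breaks under the paper's convention.
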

\begin{proof}
    Let the decision boundary functions for class $0$ and $1$ to be $f(x) = w\cdot x + b$ such that $f(x) > 0 $ for all $x \sim \mathbb{P}_{X|Y=0}$ and $f(x) < -d $ for all $ x \sim \mathbb{P}_{X|Y=1}$.
    Then, we can design the weights of the \texttt{FC} layer inside the linear probe $T_\theta$ in \cref{eq:probe_linear_design} as:
    \begin{equation}\label{eq:probe_separable}
        T_\theta (x, y) = \texttt{onehot}(y) \cdot [w \cdot x + b; -w \cdot x -b -d],
    \end{equation}
    such that $T_\theta (x, y) > 0$ for all $(x, y) \sim \mathbb{P}_{XY}$ and $T_\theta (x, y) < -d$ for all $(x, y) \sim \mathbb{P}_{X} \otimes \mathbb{P}_{Y} \setminus \mathbb{P}_{XY}$.
    For simplicity, we will write the two distributions $\mathbb{P}_{XY}$ and $\mathbb{P}_{X} \otimes \mathbb{P}_{Y} \setminus \mathbb{P}_{XY}$ as mutually exclusive $\mathbb{Q}$ and $\Tilde{\mathbb{Q}}$, respectively.
    Then, as we assume balanced binary classification with perfect linear separability, $I(X; Y) = H(Y) = \log 2$.

    We use the $\mathcal{L}_\text{MINE}$ of \cref{eq:mine,eq:binary_mine} to estimate MI:
    \begin{align} \begin{split}
    \label{eq:separability_proof_intermediate}
        \hat{I}_\theta(X, &Y) = \mathbb{E}_{\mathbb{P}_{XY}}[T_\theta] - \log \mathbb{E}_{\mathbb{P}_X \otimes \mathbb{P}_Y}[e^{T_\theta}] \\
        &= \mathbb{E}_{\mathbb{Q}}[T_\theta] - \log \mathbb{E}_{\mathbb{Q} \cup \Tilde{\mathbb{Q}}}[e^{T_\theta}] \quad (\because \mathbb{Q} \cap \Tilde{\mathbb{Q}} = \emptyset. )\\
        &= \mathbb{E}_{\mathbb{Q}}[T_\theta] - \log \left( P(\mathbb{Q})\mathbb{E}_{\mathbb{Q}}[e^{T_\theta}] + P(\Tilde{\mathbb{Q}}) \mathbb{E}_{\Tilde{\mathbb{Q}}}[e^{T_\theta}] \right) \\
        &\quad (\because \text{By the Total Expectation Theorem.} )\\
        &= \mathbb{E}_{\mathbb{Q}}[T_\theta] - \log \left(\sfrac{1}{2}\mathbb{E}_{\mathbb{Q}}[e^{T_\theta}]  + \sfrac{1}{2}\mathbb{E}_{\Tilde{\mathbb{Q}}}[e^{T_\theta}] \right) \\
        &\quad (\because \text{The task is balanced binary classification.} )\\
        &= I(X; Y) + \mathbb{E}_{\mathbb{Q}}[T_\theta] - \log (\mathbb{E}_{\mathbb{Q}}[e^{T_\theta}] + \mathbb{E}_{\Tilde{\mathbb{Q}}}[e^{T_\theta}]). \\
        &\quad (\because I(X; Y) = \log 2.)
    \end{split} \end{align}

    \noindent
    Starting from \cref{eq:separability_proof_intermediate}, we denote $|I(X, Y) - \hat{I}(X, Y)| = \Delta$.
    As the exponential function is convex, $\log (1+x) < x$ for $x>0$, and the probe design in \cref{eq:probe_separable} bounds $T_\theta$ with $0$ for $\mathbb{Q}$ and $-d$ for $\Tilde{\mathbb{Q}}$,
    \begin{align} \begin{split}
    \Delta &= \log \left( (\mathbb{E}_{\mathbb{Q}}[e^{T_\theta}] + \mathbb{E}_{\Tilde{\mathbb{Q}}}[e^{T_\theta}]) / e^{\mathbb{E}_{\mathbb{Q}}[T_\theta]} \right ) \\
    &\leq \log (1+ \mathbb{E}_{\Tilde{\mathbb{Q}}}[e^{T_\theta}] / e^{\mathbb{E}_{\mathbb{Q}}[T_\theta]}) \leq \mathbb{E}_{\Tilde{\mathbb{Q}}}[e^{T_\theta}] / e^{\mathbb{E}_{\mathbb{Q}}[T_\theta]} \leq e^{-d} / e^{0}.\qedhere
    \end{split} \end{align}

\end{proof}

\vspace{-1.5em}
\subsection{Comparing accuracy with MI for measuring information}\label{ssec:accuracy_bounded}
\vspace{-0.5em}
\Cref{eq:log_loss_is_mi} indicates that we can easily estimate MI by averaging the classification loss.
However, existing literature often uses accuracy to quantify the information amount \cite{alain2016understanding,belinkov2022probing}, often being empirically useful \cite{pasad2023comparative}.
We suspect this is because accuracy behaves as a good surrogate for MI, even with quite a loose bound in theory.
\vspace{-0.5em}
\begin{thm}\label{thm:mi_vs_acc}
    Assume a $C$-way classification task where the prediction for the true label $y$ of input $x$ is bounded by $0 < \epsilon \leq \hat{P}_\theta(y|x) \leq 1$.
    Given the estimated accuracy $\hat{a}_\theta$, estimated MI $\hat{I}_\theta$ is bounded by:
    \begin{align} \begin{split} \label{eq:mi_acc_bound}
        H(Y) - \hat{a}_\theta\log C + (1-\hat{a}_\theta) \log \epsilon <& \\
         \hat{I}_\theta(X, Y) < H(Y)  - (1&-\hat{a}_\theta) \log 2.
    \end{split} \end{align}
\end{thm}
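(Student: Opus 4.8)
The plan is to rewrite the estimated MI as a deficit from $H(Y)$ via \cref{eq:log_loss_is_mi}, and then sandwich the averaged classification loss between two accuracy-dependent quantities. Writing $\hat{L}_\theta = \mathbb{E}_{\mathbb{P}_{XY}}[-\log \hat{P}_\theta(y\mid x)]$ for the averaged log loss, \cref{eq:log_loss_is_mi} gives $\hat{I}_\theta(X,Y) = H(Y) - \hat{L}_\theta$. Subtracting $H(Y)$ and flipping signs shows that the claimed two-sided bound \cref{eq:mi_acc_bound} on $\hat{I}_\theta$ is exactly equivalent to the single chain $(1-\hat{a}_\theta)\log 2 < \hat{L}_\theta < \hat{a}_\theta \log C - (1-\hat{a}_\theta)\log \epsilon$. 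Everything therefore reduces to bounding $\hat{L}_\theta$ by accuracy.

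First I would split the expectation according to whether the argmax prediction is correct, using the law of total expectation with weights $\hat{a}_\theta$ (correct) and $1-\hat{a}_\theta$ (incorrect):
\[
\hat{L}_\theta = \hat{a}_\theta \, \mathbb{E}_{\text{correct}}[-\log \hat{P}_\theta(y\mid x)] + (1-\hat{a}_\theta)\, \mathbb{E}_{\text{incorrect}}[-\log \hat{P}_\theta(y\mid x)].
\]
The crux is to bound the per-example log loss inside each conditional expectation purely from the classification outcome. On a correctly classified example the true label attains the maximum of the $C$ posterior entries, which sum to one, so $\hat{P}_\theta(y\mid x) \geq 1/C$ and hence $0 \leq -\log \hat{P}_\theta(y\mid x) \leq \log C$. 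On a misclassified example some competing label $y'$ satisfies $\hat{P}_\theta(y'\mid x) \geq \hat{P}_\theta(y\mid x)$; since these two entries alone sum to at most one, the true-label probability obeys $\hat{P}_\theta(y\mid x) \leq 1/2$, while the hypothesis floor gives $\hat{P}_\theta(y\mid x) \geq \epsilon$, so $\log 2 \leq -\log \hat{P}_\theta(y\mid x) \leq -\log \epsilon$.

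Substituting the lower endpoints ($0$ for correct, $\log 2$ for incorrect) yields $\hat{L}_\theta \geq (1-\hat{a}_\theta)\log 2$, and substituting the upper endpoints ($\log C$ for correct, $-\log \epsilon$ for incorrect) yields $\hat{L}_\theta \leq \hat{a}_\theta \log C - (1-\hat{a}_\theta)\log \epsilon$. Negating and adding $H(Y)$ then recovers \cref{eq:mi_acc_bound}. The strict inequalities come from the misclassification condition being strict, so that $\hat{P}_\theta(y\mid x) < 1/2$ forces loss strictly above $\log 2$, together with the true-label probability not saturating at one on correctly classified points.

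I expect the main obstacle to be the per-example probability bounds rather than the bookkeeping: specifically, recognizing that the argmax outcome alone forces $\hat{P}_\theta(y\mid x) \geq 1/C$ when correct and $\hat{P}_\theta(y\mid x) \leq 1/2$ when incorrect. The latter is what produces the $\log 2$ factor and is the one non-obvious ingredient. The assumption $\hat{P}_\theta(y\mid x) \geq \epsilon$ then plays only the role of capping the loss on misclassified examples at $-\log \epsilon$, which is what keeps the lower bound on $\hat{I}_\theta$ finite.
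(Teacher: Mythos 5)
Your proposal is correct and follows essentially the same route as the paper's proof: the same decomposition of the expected log loss by correct versus incorrect predictions weighted by $\hat{a}_\theta$, the same per-example bounds ($1/C$ floor when correct, $1/2$ ceiling when incorrect, $\epsilon$ floor from the hypothesis), and the same substitution back into \cref{eq:log_loss_is_mi}. The only cosmetic difference is that you work with the negated loss $\hat{L}_\theta$ rather than the log probability directly.
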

\begin{proof}
    We start with estimating \cref{eq:log_loss_is_mi} using the NN probe $T_\theta$:
    \begin{align} \begin{split} \label{eq:cond_entr}
        \hat{I}_\theta(X, Y) &= H(Y) + \mathbb{E}_{(x, y) \sim \mathbb{P}_{XY}}[\log \hat{P}_\theta (y|x)],
    \end{split} \end{align}
    where $H(Y)$ is tractable for the classification task when the label distribution $P(Y)$ is known, as mentioned in \Cref{ssec:lin_probe_maximizes_mi_bound}.
    Hence, if we obtain the bound of prediction $\hat{P}_\theta$ with respect to whether it was correct or not, we obtain the bound of $\hat{I}_\theta(X, Y)$:
    \begin{equation} \label{eq:prob_bound}
        \begin{cases}
            \frac{1}{C} < \hat{P}_\theta(y|x) \leq 1 &(\hat{y} = y) \\
            \epsilon \leq \hat{P}_\theta(y|x) < \frac{1}{2} &(\hat{y} \neq y)
        \end{cases}
    \end{equation}
    First, by assumption, the probability estimate is bounded by $\epsilon \leq \hat{P}_\theta(y|x) \leq 1$.
    Also, for the correct case ($\hat{y} = y$), the minimum probability estimate will be the case where probability is evenly distributed across all class predictions, with the correct class having a slightly higher value.
    On the other hand, the maximum probability estimate for the wrong case ($\hat{y} \neq y$) will be the case where the probability is distributed between top-2 class predictions.
    By the definition of accuracy, the second term of \cref{eq:cond_entr} can be split:
    \begin{align} \begin{split} \label{eq:cond_entr_accuracy}
        \mathbb{E}_{\mathbb{P}_{XY}}[\log \hat{P}_\theta (y|x)&] =
        \hat{a}_\theta \cdot \mathbb{E}_{\mathbb{P}_{XY}}[\log \hat{P}_\theta (y|x) | \hat{y} = y] \\
        &+ (1-\hat{a}_\theta) \cdot \mathbb{E}_{\mathbb{P}_{XY}}[\log \hat{P}_\theta (y|x) | \hat{y} \neq y].
    \end{split} \end{align}
    We plug in the above bounds of \cref{eq:prob_bound} into \cref{eq:cond_entr_accuracy} to obtain:
    \begin{align} \begin{split}
        \label{eq:mi_second_term_bound}
        \hat{a}_\theta \log \sfrac{1}{C} + (1-\hat{a}_\theta) \log \epsilon &< \mathbb{E}_{\mathbb{P}_{XY}}[\log \hat{P}_\theta (y|x)] \\
        &< \hat{a}_\theta \log 1 + (1-\hat{a}_\theta) \log \sfrac{1}{2}. 
    \end{split} \end{align}
    By substituting \cref{eq:mi_second_term_bound} into \cref{eq:cond_entr}, we obtain \cref{eq:mi_acc_bound}. \qedhere
\end{proof}
Based on \cref{eq:prob_bound}, we can construct an example where MI varies largely with accuracy, even though the example would be unlikely in real-world cases.
By having additional assumptions, we can obtain tighter bounds, which are thoroughly explored in the recent $\mathcal{H}$-consistency literature \cite{mao2023cross}.
We further show the empirical behaviors comparing MI and accuracy in \Cref{ssec:linear_vs_finetune}.

\begin{figure*}[t!]
  \centering
  \subfloat[Word classification. \label{fig:linear_vs_finetune_word}]{
    \includegraphics[height=0.2\textwidth]{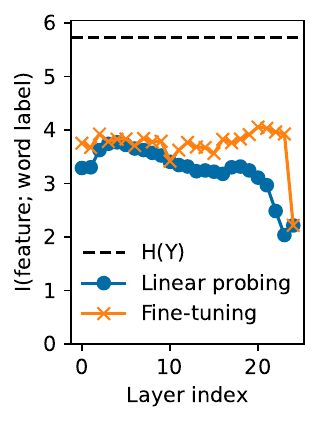}
    \includegraphics[height=0.2\textwidth]{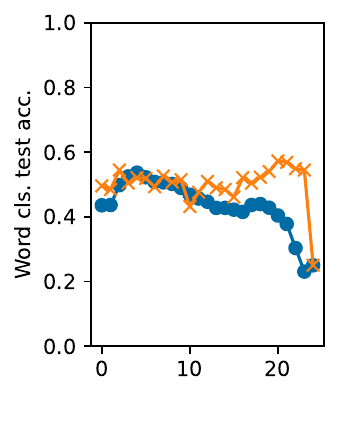}
  }
  \subfloat[Phoneme classification. \label{fig:linear_vs_finetune_phoneme}]{
    \includegraphics[height=0.2\textwidth]{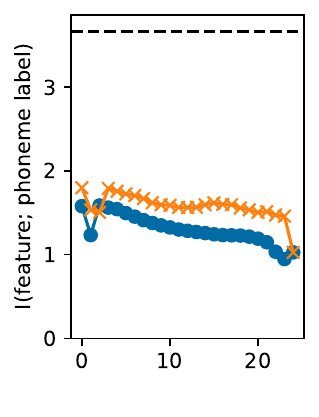}
    \includegraphics[height=0.2\textwidth]{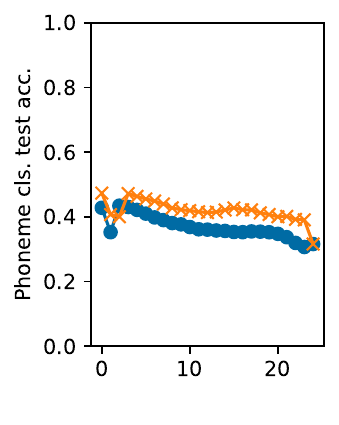}
  }
  \subfloat[Comparing different probe capacities. \label{fig:mlp_probing}]{
    \includegraphics[height=0.2\textwidth]{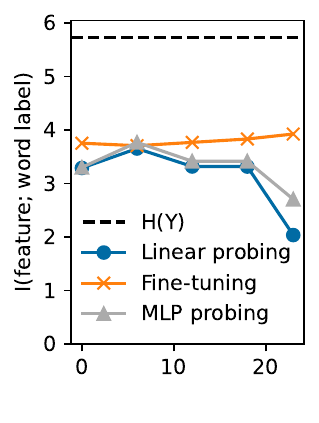}
    \includegraphics[height=0.2\textwidth]{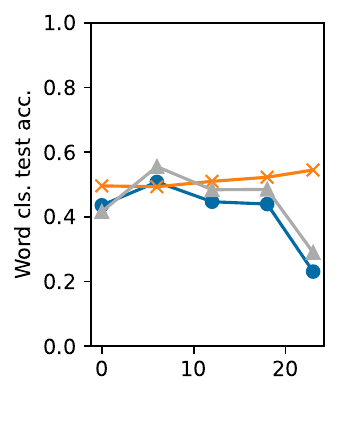}
  }
  \caption{
  Various probing experiments.
  We compare different tasks (word and phoneme classification), probe capacities (linear probing, fine-tuning, and MLP probing), and ways of measuring information (MI and accuracy).
  Upper bound for MI and accuracy is $H(Y)$ (by \cref{eq:entropy_upper_bounds_mi}) and 100\%, meaning that the full information on the label $Y$ is retrieved from the layer-wise representation $H^i$.
  }
\vspace{-1.5em}
\end{figure*}
\vspace{-0.5em}
\section{Experiment} \label{sec:exp}
\vspace{-0.5em}
\subsection{Experiment settings} \label{ssec:db}
\vspace{-0.5em}
To empirically validate our claims in \Cref{sec:method}, we probe on self-supervised large-scale speech models, which have recently started drawing attention \cite{pasad2021layer,pasad2023comparative,pasad2023self}.
We choose the word and phoneme classification task, following the setting of \cite{pasad2021layer,pasad2023comparative}.
We use the CommonPhone dataset \cite{klumpp2022common}, which provides the word and phoneme alignments.
It contains gender-balanced speech in English, German, Spanish, French, Italian, and Russian, summing up to 116.5 hours.
We removed samples longer than 2 seconds due to GPU constraints.
We further removed the class with less than 200 and 1000 samples for words and phonemes to avoid the severe long-tailed behavior of the dataset, ending up with 76 words and 99 phonemes.


Given the time-aligned words or phonemes, we derive the feature by inputting the sliced audio\footnote{\cite{pasad2021layer,pasad2023comparative} extracted the features within the full utterance.} and averaging through the temporal dimension.
To support various languages, we explore the self-supervised cross-lingual XLS-R model \cite{babu2022xls}.
We chose the smallest XLS-R model with 24 layers of Transformer encoder architecture, a total of 300M parameters.
We utilize the provided train, valid, and test set splits from the dataset and train the probes for 50 and 3 epochs for word and phoneme classification while evaluating the model every 1 and 0.1 epoch.
We use the AdamW optimizer \cite{loshchilov2019decoupled} with default settings, with a learning rate of 5e-4, and employ early stopping with the MI estimate on the validation set.
The source codes for the experiments are publicly available.\footnote{\url{https://github.com/juice500ml/information\_probing}}

\vspace{-0.5em}
\subsection{Comparing linear probing with fine-tuning} \label{ssec:linear_vs_finetune}
\vspace{-0.5em}
\Cref{ssec:linprobe_eq_finetune} emphasizes the theoretical equivalence between linear probing (\cref{eq:probe_linear_design}) and fine-tuning (\cref{eq:probe_finetune_design}).
However, due to their difference in the probe capacity, we can expect that the former will yield a more inaccurate MI estimate.
As the estimated MI $\hat{I}_\theta(Y;H^i)$ is upper-bounded by the true MI $I(Y;H^i)$ (\cref{eq:binary_mine}), worse estimates always mean lower estimates.
We can observe in \cref{fig:linear_vs_finetune_word,fig:linear_vs_finetune_phoneme} that empirical results follow the theoretical conclusion; the MI estimate of the linear probing is always lower than that of fine-tuning.
As the fine-tuning probe's capacity shrinks to match that of the linear probe as the layer index $i \rightarrow L$, the MI estimate decreases steeply in the penultimate layer.
Further, in \cref{fig:linear_vs_finetune_word}, we can observe the curved layer-wise MI estimates of linear probing described in \Cref{ssec:curved_perf}, where the middle layer showing the largest MI estimate.
Comparison between linear probing and fine-tuning further experimentally supports the claim in \Cref{ssec:curved_perf}.
More accurate MI estimates from fine-tuning indicate that the word information $I(Y; H^i)$ does not magically appear in the intermediate representations.
Rather, it is retained nonlinearly for the later representations, where the linear probe underestimates the MI due to its structural limitation.
We also compare the accuracy $\hat{a}_\theta$ and the estimated MI $\hat{I}_\theta$ in \cref{fig:linear_vs_finetune_word}.
Unlike MI, fine-tuning shows slightly worse accuracy compared to linear probing in the earlier layers.
It indicates that accuracy is a good proxy for MI while not having the exact equivalence, further supporting the conclusion of \Cref{ssec:accuracy_bounded}.
Finally, the curved layer-wise MI estimate of linear probing does not appear in phoneme classification (\cref{fig:linear_vs_finetune_phoneme}).
We suspect that this behavior is due to the different levels of task difficulty (further covered in \Cref{ssec:task_difficulty}).

\vspace{-0.5em}
\subsection{Testing different probe capacities} \label{ssec:network_capacity}
\vspace{-0.5em}
\Cref{eq:binary_mine} implies that, as the NN probe $T_\theta$ approaches the optimal function $T^*$, its MI estimate $\hat{I}_\theta(Y;H^i)$ gets closer to the true MI $I(Y;H^i)$.
\Cref{ssec:linprobe_eq_finetune} further claims that the bigger probe capacity will yield a better MI estimation.
Hence, we employ MLP with the hidden dimension of $1000$ and \texttt{ReLU} activation as the probe \cite{hewitt2019designing}, where we expect the probe capacity of the MLP probing to position between linear probing and fine-tuning.
MLP is often used in the probing literature based on the universal approximation theorem \cite{pinkus1999approximation}, \textit{i.e.,} sufficient hidden unit size with nonlinearity can approximate any function, even though it cannot guarantee the convergence of the probe $T_\theta$ to the optimal $T^*$.
We use the same word classification setting in \Cref{ssec:linear_vs_finetune}, where we probe on fewer layers due to computational constraints, layer index $i \in \{0, 6, 12, 18, 23\}$.
Experimental results of \cref{fig:mlp_probing} validate our claim in \Cref{ssec:linprobe_eq_finetune}.
The estimated MI gets smaller in the order of fine-tuning, MLP probing, and linear probing; probe capacity from big to small.

\vspace{-0.5em}
\begin{figure}[t]
  \centering
  \subfloat{\includegraphics[height=0.21\textwidth]{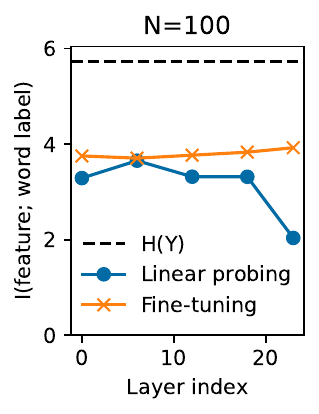}}
  \subfloat{\includegraphics[height=0.21\textwidth]{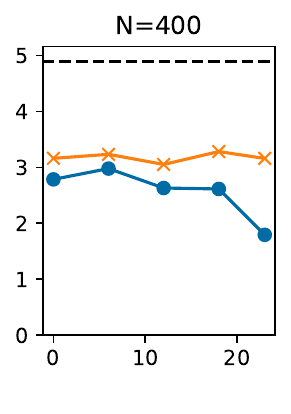}}
  \subfloat{\includegraphics[height=0.21\textwidth]{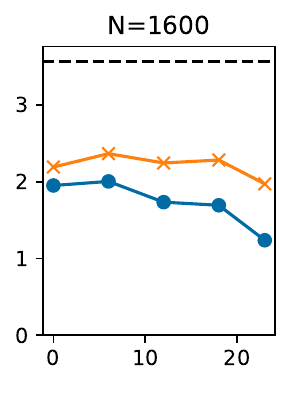}}
  \caption{
  $N$ denotes the minimum classwise sample limit for word classification.
  As $N$ increases, task gets more easier, and the curved layer-wise MI estimate of linear probing diminishes.
  }
  \vspace{-10pt}
  \label{fig:task_complexity}
\end{figure}
\vspace{-0.5em}
\subsection{Impact of task difficulty} \label{ssec:task_difficulty}
\vspace{-0.5em}
We further focus on the task-dependent behaviors of linear probing in \Cref{ssec:linear_vs_finetune}; layer-wise MI estimates being curved in word classification (\cref{fig:linear_vs_finetune_word}) while nondecreasing in phoneme classification (\cref{fig:linear_vs_finetune_phoneme}).
To systematically compare, we adopt a hyperparameter to control the probing task difficulty: minimum classwise sample limit $N$.
By changing the limit $N$ for word classification, we can directly control the difficulty.
As the sample limit $N$ decreases, the number of classes $C$ increases, and the distribution becomes long-tailed, increasing task difficulty \cite{kye2023tidal}.
\cref{fig:task_complexity} demonstrates that as task difficulty decreases, the layer-wise MI estimates change; initially being curved, it slowly changes into nonincreasing estimates.
The experimental results imply the task difference between the word and the phoneme classification, the task design influencing the linear probing results.
However, fine-tuning does not show a significant difference even with different task difficulties, showing the method's stability.
Even though $H(Y)$ greatly differs as $N$ changes, the normalized MI estimate $\hat{I}_\theta(H^i; Y)/H(Y)$ stays roughly the same, indicating the robustness against the hyperparameter $N$.

\vspace{-0.5em}
\section{Conclusion} \label{sec:concl}
\vspace{-0.5em}
We explored probing through the lens of information theory to deepen the understanding of interpreting the layer-wise MI estimates.
We first showed that linear probing and fine-tuning are equivalent in terms of variational bounds of MI.
Then, we demonstrated that the curved layer-wise MI estimate stems from the architectural constraint, while the margin of linearly separable representations plays a role where it can be used to quantify the ``goodness of representation.''
Finally, we analyze why accuracy behaves as a reasonable yet limited proxy for MI.
Our experimental results on self-supervised speech models support our theoretical observations.


\clearpage
\bibliographystyle{IEEEbib}
\bibliography{refs}

\end{document}